\providecommand{\setN}{\mathbb{N}}
\newcommand {\R}	  {\mathbb{R}}
\DeclareMathOperator{\sign}{sign}
\renewcommand{\epsilon}{\varepsilon}
\renewcommand{\leq}{\leqslant}
\theoremstyle{plain}
\newtheorem{theorem}{Theorem}
\newtheorem{lemma}[theorem]{Lemma}
\theoremstyle{definition}
\newtheorem{remark}{Remark}
\title{Proximity results and faster  algorithms for Integer Programming  using the Steinitz Lemma} 
\author{Friedrich Eisenbrand \\
 EPFL\\
 Switzerland\\
 {\small \texttt{friedrich.eisenbrand@epfl.ch}}
\and
 Robert Weismantel \\
 ETH Zürich\\
 Switzerland\\
 {\small \texttt{robert.weismantel@ifor.math.ethz.ch}}
}
\date{\today}
\begin{document}

\title{\Large Proximity results and faster  algorithms for Integer Programming  using the Steinitz Lemma} 
\author{Friedrich Eisenbrand\thanks{EPFL, Lausanne,  Switzerland} 
\and
Robert Weismantel\thanks{
 ETH Zürich,  Switzerland}}
\date{\today}

\maketitle

\begin{abstract}
  \noindent 
  We consider integer programming problems in standard form 
  $\max \{c^Tx : Ax = b, \, x≥0, \, x ∈ ℤ^n\}$ 
  where $A ∈ ℤ^{m ×n}$,
  $b ∈ ℤ^m$
  and $c ∈ ℤ^n$.
  We show that such an integer program can be solved in time
  $(m ⋅ Δ)^{O(m)} ⋅ \|b\|_∞^2$,
  where $Δ$
  is an upper bound on each absolute value of an entry in $A$. 
  This improves upon the longstanding best bound  of
  Papadimitriou (1981) of $(m⋅Δ)^{O(m^2)}$,  where in addition,  the absolute values of the entries of $b$ also need to be bounded by $Δ$. 
  Our result relies on a lemma of Steinitz that states that a set of
  vectors in $ℝ^m$ that is 
  contained in the unit ball of a norm and that sum up to zero can be ordered 
  such that all partial sums are of norm bounded by $m$.

  We also use the Steinitz lemma to show that the $\ell_1$-distance
  of an optimal integer and fractional solution,  
  also under the presence of upper bounds on the variables, is bounded
  by $m ⋅ (2\,m⋅Δ+1)^m$.
  Here $Δ$
  is again an upper bound on the absolute values of the entries of
  $A$.
  The novel strength of our bound is that it is independent of $n$.

  We provide evidence for the significance of our bound by applying it to general knapsack problems where we obtain
   structural and algorithmic results that improve upon  the recent literature. 
\end{abstract}

\section{Introduction}
\label{sec:introduction}

Many algorithmic problems,  most notably problems from \emph{combinatorial optimization} and the \emph{geometry of numbers} can be formulated as an \emph{integer linear  program}. This is an optimization problem of the form  
\begin{equation}
  \label{eq:2}
  \max \{ c^T x : Ax = b, \, x ≥0, \, x ∈ ℤ^n \}  
\end{equation} 
where $A ∈ ℤ^{m×n}$, $b ∈ℤ^m$ and $c ∈ ℤ^n$.
An integer program as we describe it above is in \emph{(equation) standard form}. Any integer program in \emph{inequality form}, i.e.,  $\max\{ c^Tx : Ax ≤ b, \, x ∈ℤ^n\}$ can be transformed into an integer program in standard form by duplicating variables and introducing \emph{slack variables}. 
 Unlike \emph{linear programming}, integer programming is NP-complete~\cite{BoTr76}. 

Lenstra~\cite{Lenstra83} has shown that an integer program in inequality form, with a fixed number of variables can be solved in polynomial time. A careful analysis of his algorithm shows a time bound of $2^{O(n^2)}$ times a polynomial in the length of the input that contains binary encodings of numbers. This has been improved by Kannan~\cite{Kannan87} to $2^{O(n \log n)}$ which is the best asymptotic upper bound  on the exponent of $2$ in 30 years. The question whether this can be improved to $2^{O(n)}$ belongs to one of the most prominent mysteries in the theory of algorithms. The current record on the constant hidden in the $O$-notation in the exponent  is held by Dadush~\cite{dadush2012integer}.

\medskip
\noindent 
Papadimitriou~\cite{MR677087} has provided  an algorithm for integer programs in standard form that is, in some sense, complementary to the result of Lenstra and its improvement of Kannan. He considered the case of an integer program~\eqref{eq:2} in which the entries of $A$ and $b$ are bounded by $Δ$ in absolute value. His algorithm is \emph{pseudopolynomial} if $m$ is fixed and is thus a natural generalization of pseudopolynomial time algorithms to solve unbounded knapsack problems~\cite{MR0378463}.

The algorithm is based on  dynamic programming and can be briefly described as follows. 
First, one shows that,  if~\eqref{eq:2} is feasible and bounded, then \eqref{eq:2} has
an optimal solution with components bounded by $U = (n+1) (m \cdot
Δ)^m$.
The dynamic program is a maximum weight path problem on the (acyclic)
graph with nodes
\begin{displaymath}
V = \{0,\ldots,n\} × \{ -n⋅Δ⋅U, \ldots, n⋅Δ⋅U\}^m 
\end{displaymath}
where one has an arc from $(j,b')$ to $(j+1,b'')$ if $b'' - b' = k
\cdot a^{(j+1)}$ for some $k \in \setN_0$ and where $a^{(j+1)}$ is the
$j+1$-st column of $A$. The weight of this arc is $k \cdot
c_{j+1}$. The optimum solution corresponds to a longest path to the
vertex $(n,b)$. The running time of this algorithm is linear in the size of the graph. 
The number of nodes of this graph  is equal to 
$(n+1) ⋅ ( 2n Δ U +1)^m$ and this is at least 
$(m Δ)^{m^2}$. 
The upper bound on the running time, as stated  in~\cite{MR677087} is  
\begin{equation}
  \label{eq:8}
  O(n^{2m+2} ⋅ (mΔ)^{(m+1)(2m+1)}). 
\end{equation}



\subsection{Contributions of this paper} 
\label{sec:contr-this-paper}

We present new structural and algorithmic results concerning integer programs in standard form~\eqref{eq:2} using the Steinitz lemma, see Section~\ref{sec:contr-this-paper} below.

\begin{enumerate}[a)]
\item \label{item:4}
We show that the integer program~\eqref{eq:2} can be solved in time
\begin{displaymath}
  (m⋅Δ)^{O(m)} ⋅ \|b\|_∞^2 
\end{displaymath}
where $Δ$ is an upper bound on the entries of $A$ only. This improves upon the $(m⋅Δ)^{Ω(m^2)}$ running time of the algorithm of Papadimitriou. Recall that in the setting of Papadimitriou the entries of  $b$  are bounded by $Δ$ as well.  This improvement  addresses an open problem raised by Fomin et al.~\cite{fomin2016fine,lewenstein_et_al:DR:2017:7037}. 
\end{enumerate}

\medskip 
\noindent 
We then consider integer programs of the form
\begin{displaymath}
  \max \{c^Tx : Ax = b, \, 0 ≤x≤u, \, x ∈ ℤ^n\} 
\end{displaymath}
where $A ∈ℤ^{m ×n}$,
$b ∈ ℤ^m$,
$u ∈ℕ^n$,
and $c ∈ℤ^n$
and $|a_{ij}| ≤ Δ$
for each $i,j$.
Thus we allow the variables of integer program~\eqref{eq:2} to be
bounded from above by $0≤x≤u$
for some $u ∈ ℕ^n$.
In this setting, we show the following.

\begin{enumerate}[a)]
 \setcounter{enumi}{1}
\item We provide new  bounds on  the distance of an optimal vertex $x^*$ of the LP-relaxation and an optimal solution  of the integer program itself. More precisely, we show that there exists an optimal solution $z^*$ of the integer program such that 
  \begin{displaymath}
    \|z^* - x^* \|_1 ≤ m ⋅ (2⋅m⋅Δ+1)^m 
  \end{displaymath}
holds. 
A classical bound of 
Cook et al.~\cite{CookGerardsSchrijverTardos86} implies, in the standard-form setting, $\|z^* - x^*\|_∞ ≤ n ⋅ (\sqrt{m} ⋅Δ)^m$ and thus $\|z^* - x^*\|_1 ≤ n^2 ⋅ (\sqrt{m} ⋅Δ)^m$. 
Thus our bound, which is independent of $n$, is an improvement by a factor of $n^2$ for integer programs in standard form and fixed $m$. \label{item:5}
\item We use this to  generalize  a recent bound on the absolute integrality gap for the case $m=1$ by Aliev et al.~\cite{aliev2016integrality} that states that $c^T(x^* - z^*) ≤ \|c\|_∞ ⋅ 2 ⋅Δ $. Our distance bound shows that the absolute integrality gap is bounded by $\|c\|_∞ ⋅ O(m)^{m+1} ⋅  O(Δ)^m$. \label{item:7}
\item Our new distance bound yields an algorithm for integer programs in standard form that runs in time 
  \begin{displaymath}
    n ⋅  O(m)^{(m+1)^2} ⋅ O(Δ)^{m⋅(m+1)} ⋅ \log^2 (m⋅Δ)
  \end{displaymath}
For the unbounded and bounded knapsack problems where all items are of weight $Δ_a$ at most, we obtain algorithms that run in time $O(n⋅Δ_a^2)$ and $O(n^2⋅Δ_a^2)$ respectively. This is an improvement by a factor of $n$ to the so far best bounds for this problem by  Tamir~\cite{tamir2009new}. 
\label{item:6}
\end{enumerate}

\noindent
Our techniques have been recently refined  by Jansen and Rohwedder~\cite{jansen2018integer} who obtained better constants in the exponent of the running time of integer programs without upper bounds. We also want to mention a recent tight lower bound for integer programming. Knop et al.~\cite{knop2018tight}  prove that even for $\{0,1\}$-matrices, the running time of our algorithm is probably optimal. In a nutshell, an algorithm with better asymptotic running time in the exponent for unbounded integer programs would contradict the exponential time hypothesis. 
This improves the lower bounds of Fomin et al.~\cite{fomin2016fine}.

\subsection{The Steinitz lemma} 
\label{sec:lemma-steinitz}

Our algorithms and structural results rely on a Lemma of
Steinitz~\cite{steinitz1913bedingt} that we now describe. Here $\|\cdot\|$ denotes  an arbitrary norm of $ℝ^m$. 
 
\begin{theorem}[Steinitz (1913)]
	\label{thr:3}
        Let
        $x_1,\ldots,x_n \in \R^m$
        such that 
        \begin{displaymath}
          \sum_{i=1}^n x_i = 0    \quad \text{ and } \quad  \|x_i\| \leq 1 \, \text{ for each }i. 
        \end{displaymath}
        There exists a permutation $π ∈ S_n$
        such that all partial sums satisfy 
	$$\|\sum_{j=1}^k x_{\pi(j)}\| \leq c(m) \; \text{ for all } k=1,\ldots,n.$$ Here $c(m)$ is a constant depending on $m$ only. 
\end{theorem}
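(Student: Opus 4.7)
The plan is to construct the permutation $\pi$ greedily while maintaining the invariant that every partial sum $s_k = \sum_{j=1}^k x_{\pi(j)}$ has norm at most $c(m)$. Setting $s_0 = 0$ and $R_0 = \{1,\ldots,n\}$, at step $k+1$ one chooses $\pi(k+1) \in R_k$ so that $\|s_{k+1}\|$ stays bounded, and the construction terminates when $R_k$ is empty.

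The existence of a suitable next index rests on a linear-programming/Carathéodory type argument. The identity
\begin{displaymath}
  \sum_{j \in R_k} x_j = -s_k,
\end{displaymath}
which follows immediately from $\sum_{i=1}^n x_i = 0$, implies that the polytope
\begin{displaymath}
  P_k = \Bigl\{ \lambda \in [0,1]^{R_k} : \sum_{j \in R_k} \lambda_j x_j = -s_k \Bigr\}
\end{displaymath}
is nonempty, since it contains the all-ones vector $\mathbf{1}$. A vertex $\lambda^*$ of $P_k$ has at most $m$ coordinates strictly between $0$ and $1$, because the defining system has only $m$ equality constraints beyond the box inequalities, so the remaining coordinates must be tight at $0$ or $1$. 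This sparsity of the fractional part is the source of the bound: writing $s_k = -\sum_{j\in J_1} x_j - \sum_{j \in J_F} \lambda^*_j x_j$ where $J_1=\{j : \lambda^*_j = 1\}$ and $J_F$ is the set of at most $m$ fractional coordinates, the fractional residual $\sum_{j \in J_F} \lambda^*_j x_j$ has norm at most $m$ by the triangle inequality.

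One then chooses $\pi(k+1)$ from $J_1$ (if nonempty; if $J_1=\emptyset$ then $|R_k| \leq m$ and the remaining vectors can be appended trivially). After adding $x_{\pi(k+1)}$, the \emph{next} LP $P_{k+1}$ inherits from $\lambda^*$ a feasible point given by restriction to $R_{k+1}$, so one can again pick a vertex with at most $m$ fractional coordinates. The crucial observation is that the new partial sum $s_{k+1}$ equals the negative of the corresponding LP relation, and the BFS-sparsity carries over to yield $\|s_{k+1}\| \leq m$ throughout. Grinberg and Sevastyanov carried out this analysis and obtained the optimal constant $c(m) = m$.

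The main obstacle is controlling the partial sums \emph{one vector at a time} rather than in batches. The naïve reading of the LP argument suggests adding all of $J_1$ at once to return the cumulative sum into $m \cdot B$, but intermediate sums during such a batch need not be bounded; the subtle point is that the vertex structure is stable under one-step updates, so re-solving the LP after each single pick keeps the invariant intact. Handling this invariance, together with the degenerate case $|R_k| \leq m$ (where a simple triangle-inequality argument suffices), is the heart of the proof.
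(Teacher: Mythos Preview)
Your proposal has the right ingredients---an LP over a box with $m$ linear equalities, the observation that a vertex has at most $m$ fractional coordinates, and the attribution to Grinberg--Sevast'yanov---but the way you assemble them does not establish the invariant $\|s_k\|\le m$. In your LP $P_k=\{\lambda\in[0,1]^{R_k}:\sum_{j\in R_k}\lambda_j x_j=-s_k\}$ a vertex can have $J_1$ arbitrarily large, and writing $-s_k=\sum_{J_1}x_j+\sum_{J_F}\lambda^*_jx_j$ only bounds the fractional part; after moving one index from $J_1$ to the prefix you get $s_{k+1}=-\sum_{J_1\setminus\{\pi(k+1)\}}x_j-\sum_{J_F}\lambda^*_jx_j$, which carries no bound. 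Concretely, for $m=1$ and $x_1=x_2=x_3=1$, $x_4=x_5=x_6=-1$, the vertex $\lambda^*=(1,1,0,1,1,0)$ of $P_0$ has $J_1=\{1,2,4,5\}$; choosing $\pi(1)=1$ and then $\pi(2)=2$ (both in the respective $J_1$'s, with your inherited feasible point) gives $\|s_2\|=2>1=m$. Your side claim that $J_1=\emptyset$ forces $|R_k|\le m$ is also false: the all-zero vector can be a vertex of $P_k$ while $|R_k|$ is large.

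What is missing is the extra normalization equation. The paper works \emph{backward}: it constructs a nested chain $A_n\supset A_{n-1}\supset\cdots\supset A_m$ of prefix index sets and, for each $A_k$, uses the LP
\[
\sum_{i\in A_k}\lambda_i x_i=0,\qquad \sum_{i\in A_k}\lambda_i=k-m,\qquad 0\le\lambda_i\le1.
\]
The second equation is the key: it forces $\sum_{i\in A_k}(1-\lambda_i)=m$, whence $\|\sum_{i\in A_k}x_i\|=\|\sum_{i\in A_k}(1-\lambda_i)x_i\|\le m$ directly. To pass from $A_k$ to $A_{k-1}$ one replaces $k-m$ by $k-1-m$, takes a vertex (at most $m+1$ fractional coordinates), observes that some coordinate must equal $0$ by a counting argument, and deletes that index. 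Your forward scheme can be repaired to the symmetric version of this---run the same LP over $R_k$ with right-hand side $0$ and normalization $\sum\lambda_j=|R_k|-m$, and remove a zero-coordinate at each step---but not with the LP $P_k$ you wrote down.
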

Steinitz showed $c(m)≤ 2m$, see also~\cite{sevast1997steinitz,barany2008power}. 
It was later shown by Sevast'anov~\cite{sevastianov1978approximate,sevast1997steinitz} that the constant   $c(m)≤m$. This is tight for asymmetric norms i.e., general gauge functions~\cite{grinberg1980value}. However, this bound is not optimal for symmetric norms. In particular, Banaszczyk proved $c(m) ≤ m - 1 + 1/m$, see~\cite{sevast1997steinitz}. It is a wide open question to understand the Steinitz constant for $\ell_p$-norms for $p≥2$. It is conjectured that the Steinitz constant should be $O(\sqrt{m})$ for the $\ell_\infty$-norm~\cite{barany2008power}. A proof of this conjecture or any asymptotic improvement would directly improve the bounds provided in this manuscript and would provide tightness results in a variety of settings. 
\begin{figure*}
\label{fig:1}
\begin{center}
  \includegraphics{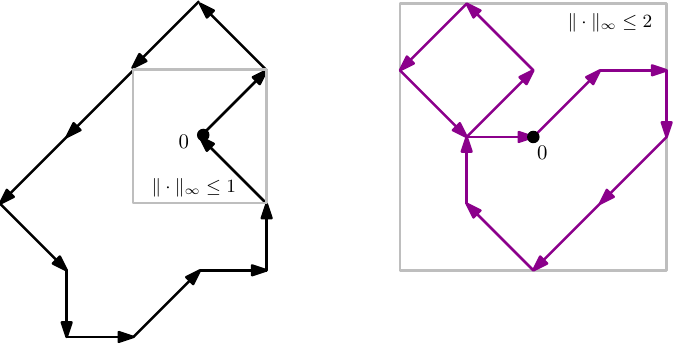}
\end{center}
\caption{An example of a re-ordering satisfying the Steinitz bound for the $\ell_∞$-norm. The vectors on the left have $\ell_∞$-norm at most one and summ up to zero. These vectors are rearranged on the right such that the partial sums have $\ell_∞$-norm bounded by $2$.}
\end{figure*}
The proof of the Steinitz lemma with constant $c(m) = m$  is based on LP-techniques~\cite{grinberg1980value} and can be quickly summarized as follows. One constructs sets $A_n ⊃A_{n-1}⊃ \cdots ⊃A_m$ where $A_n = \{1,\dots,n\}$ and $|A_k|=k$  for each $k$ such that the following linear system  which is described by $A_k$ with variables $λ_i, \, i ∈A_k$  is feasible for each $k$: 
\begin{equation}
  \label{eq:10} 
  \begin{array}{lcl}
    ∑_{i ∈A_k} λ_i x_i     &=& 0 \\
    ∑_{i ∈A_k} λ_i & = & k-m \\
    0 ≤ λ_i ≤1,  \,&& i ∈A_k.    
  \end{array}  
\end{equation}
For any permutation $π$ with $\{π(i)\} = A_i \setminus A_{i-1}$ for $i=n,\dots,m+1$ one has then for any $k≥m$
\begin{eqnarray*}  
\|  ∑_{i=1}^k x_{π(i)} \| & = & \|  ∑_{i∈ A_k} x_{i} \| \,
                        =   \|  ∑_{i∈ A_k} (1 - λ_i) \, x_{i} \|\\
                       & ≤ & ∑_{i∈ A_k} (1 - λ_i) 
                       \, =  m.
\end{eqnarray*}
In the  inequality, we used $\|x_i\| ≤1$ for each $i$ and in the first and second  equation  we used~\eqref{eq:10}. 
The sets $A_k$ are constructed inductively as follows. $A_n = \{1,\dots,n\}$. If $A_k$ has been constructed, where $k>m$, one first notes that the system~\eqref{eq:10} is of course also solvable if the right-hand-side  $k-m$ of the second constraint is replaced by $k-1-m$. Once this replacement has been done, one observes that~\eqref{eq:10} consists of  $m+1$ equations and the inequalities $0≤λ≤1$. A vertex solution of~\eqref{eq:10} has thus at most $m+1$ fractional entries that sum up to a value less than $m+1$. A vertex solution of~\eqref{eq:10} must therefore have one entry equal to zero. Otherwise the components of the vertex sum up to a value larger than $k-1-m$. The set $A_{k-1}$ is now the set $A_k$ from which the index corresponding to the zero in the vertex solution has been removed.

The reader will notice some resemblance in spirit to the proof of the Beck-Fiala theorem in Discrepancy Theory~\cite{beck1981integer,matousek2009geometric}.
Discrepancy techniques have given improvements to the Steinitz and Beck-Fiala problem, when one allows a weak dependence on the number of vectors. In particular, Banaszczyk~\cite{banaszczyk1998balancing} proved a Beck-Fiala bound of $O(\sqrt{t \log n})$ for set systems on $n$ elements with sparsity $t$. He also derived an $\ell_\infty$ Steinitz bound of $O(\sqrt{m \log n})$ for $n$ vectors in dimension $m$~\cite{banaszczyk2012series}. For constructive versions of these bounds we refer the interested reader to~\cite{bansal2016algorithm,bansal2017algorithmic}.

We are not the first to apply the Steinitz lemma in the context of
integer programming. Dash et al.~\cite{dash2012master} have shown that
an integer program~\eqref{eq:2} can be solved in pseudopolynomial time
if a certain parameter of the number of rows $τ$ 
is a function of $m$, i.e., $τ=τ(m)$.
The interesting aspect of their algorithm is that it relies on linear
programming techniques only. The number of inequalities in their
linear program is bounded by  an exponential in $τ(m)$.
Buchin et al.~\cite{buchin2012vectors} have shown that
$m^{m/2-o(m)}≤τ(m) ≤ m^{m+ o(m)}$
which then yields  an algorithm for integer programming that is
pseudopolynomial for fixed $m$
but doubly exponential in $m$.
Their upper bound on $τ(m)$
is proved via the Steinitz lemma. We take a different path in applying
the Steinitz lemma. We use it to derive more efficient dynamic programming formulations directly and indirectly via new proximity results between integer and linear programming optimal solutions.




\section{A faster dynamic program}
\label{sec:fast-dynam-progr}

We now describe a dynamic programming approach to solve~\eqref{eq:2} that is based on the Steinitz-type-lemma (Theorem~\ref{thr:3}) and which is more efficient than the original algorithm of Papadimitriou~\cite{MR677087}.    Let us first consider the feasibility problem, i.e., we have to decide whether there exists a non-negative integer vector $z^* ∈ ℤ^n_{≥0}$ such that $A z^* = b$ holds. The solution $z^*$ gives rise to a sequence of vectors $v_1,\dots,v_t$ such that each $v_i$ is a column of $A$ and 
\begin{equation}
  \label{eq:st1}
  v_1+ \cdots +v_t = b. 
\end{equation}
The $i$-th column of $A$ appears $z^*_i$ times  on the left of equation~\eqref{eq:st1}  and $t = \|z^*\|_1$. This equation can be re-written as 
\begin{equation}
  \label{eq:29}
  (v_1 - b/t)+ \cdots +(v_t - b/t) = 0.
\end{equation}
Observe that the infinity norm of  each $v_i - b/t$ is at most $2\, Δ$. 
The Steinitz-type-lemma implies that there exists a permutation $π$ of the numbers $1,\dots,t$  such that all partial sums of the sequence 
\begin{equation}
  \label{eq:3}
   v_{π(1)} - b/t,\dots, v_{π(t)} -b/t
\end{equation}
have infinity norm at most $2\, m ⋅ Δ$.  
In other words,  for each $j ∈ \{1,\dots,t\}$ one has 
\begin{equation}
  \label{eq:st2}
  \|v_{π(1)} + \cdots +  v_{π(j)} - ({j}/{t}) ⋅ b \|_∞ ≤ 2\, m ⋅Δ. 
\end{equation}
This implies that each partial sum of the sequence 
\begin{displaymath}
  v_{π(1)}, \dots, v_{π(t)} 
\end{displaymath}
is contained in the set  $\mathscr{S} ⊆ ℤ^m$  
that consists of all points $x ∈ ℤ^m$ for which there exists a $j ∈ \{1,\dots,t\}$ with 
\begin{equation}
  \label{eq:30}
  \|x - (j/t) ⋅b\|_∞ ≤ 2\, m ⋅Δ.
\end{equation}
This set $\mathscr{S}$
is as large as the number of integer points at distance at most $2 ⋅ m \,Δ$ from the line segment connecting $0$ and $b$. We now argue that this number is bounded from above by 
$
|\mathscr{S}| ≤  (4\, m ⋅Δ +1)^m ⋅ \|b\|_1. 
$ 
Let $f ∈ ℝ^m$ be any point. Since $2mΔ$ is an integer, the integer points at distance at most $ 2m Δ$ from $f$ are 
contained in the set of integer points at distance at most $ 2m Δ$ from $⌊f ⌋$. 
Therefore, an upper bound on $|\mathscr{S}|$ is the number of different integer vectors that can be obtained by rounding a point on the line-segment $(0,b)$ times $(4mΔ+1)^m$. The number of rounded integer points is at most $\|b\|_1$.  

The partial sums of 
\begin{equation}
  \label{eq:31}
  v_{π(1)},\dots,v_{π(t)} 
\end{equation}
 correspond to the nodes of a directed walk from $0$ to $b$ in the   digraph $D = (\mathscr{S},A)$ where one has a directed arc $xy ∈A$  from $x ∈ \mathscr{S}$ to $y∈ \mathscr{S}$  if $y-x$ is a column of $A$. 
If there exists a path from $0$ to $b$  in this digraph $D$ on the other hand, then the arcs of the path define a multiset of columns of $A$ summing up to $b$. 

How fast is this approach to solve the integer feasibility problem?  
The number of vertices $|\mathscr{S}|$
of the digraph is equal to $(4\,m⋅Δ+1)^m ⋅ \|b\|_1$.
The number of arcs $|A|$
is bounded by $|\mathscr{S}|⋅n$.  
The integer feasibility problem is an unweighted single-source shortest path problem that can be solved with breadth-first-search in linear time~\cite{AHU74,CLRS2001}.  Consequently,  the integer feasibility problem in
standard form~\eqref{eq:2} can be solved in time 
\begin{displaymath}
 |\mathscr{S}|⋅n =  O(m⋅Δ)^m ⋅ \|b\|_1 ⋅ n.
\end{displaymath}

\begin{theorem}
  \label{thr:1}
  Let  $A ∈ ℤ^{m ×n}$
  and $b ∈ ℤ^m$ be given and suppose that each absolute value of an entry of
  $A$ is bounded by $Δ$.   In time $ O(m⋅Δ)^m ⋅ \|b\|_1 ⋅ n $
   one can compute  a solution of
  \begin{displaymath}
    Ax = b, \, x ∈ ℤ^n_{≥0}
  \end{displaymath}
  or assert that such a solution does not exist. 
\end{theorem}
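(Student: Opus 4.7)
The plan is to carry out the dynamic-programming reduction sketched in the paragraphs preceding the statement: turn integer feasibility into unweighted $0$--$b$ reachability in a directed graph $D=(\mathscr{S},E)$ whose vertex set is a short list of lattice points lying in a tube around the segment $[0,b]\subset\mathbb{R}^m$, and whose arcs correspond to adding a single column of $A$. Once $D$ is in place, the algorithm is breadth-first search from $0$.

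More concretely, I would set
\[
\mathscr{S} \;=\; \{\, x\in\mathbb{Z}^m : \exists\,s\in[0,1]\text{ with }\|x-s\,b\|_\infty\leq 2\,m\,\Delta\,\},
\]
and take $E$ to be all pairs $(x,y)\in\mathscr{S}\times\mathscr{S}$ such that $y-x$ is a column of $A$, each arc being labelled by the column it uses. For the completeness direction, assume $Az^*=b$ has a nonnegative integer solution and enumerate its columns with multiplicity as $v_1,\dots,v_t$, where $t=\|z^*\|_1$. The centred sequence $(v_i-b/t)_{i=1}^t$ sums to zero and each term has $\ell_\infty$-norm at most $2\Delta$, so Theorem~\ref{thr:3} yields a permutation $\pi$ with $\|\sum_{i\leq j}v_{\pi(i)}-(j/t)\,b\|_\infty\leq 2m\Delta$ for every $j$. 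Taking $s=j/t$ shows that every partial sum lies in $\mathscr{S}$, and the trajectory $0,\,v_{\pi(1)},\,v_{\pi(1)}+v_{\pi(2)},\dots,b$ is a $0$--$b$ walk in $D$. Soundness is immediate: the column labels along any $0$--$b$ walk form a multiset summing to $b$, and counting multiplicities defines a feasible $x\in\mathbb{Z}_{\geq 0}^n$. BFS from $0$ therefore either discovers $b$--in which case predecessor pointers reconstruct the column sequence and hence the solution--or certifies infeasibility.

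The remaining and principal step is bounding $|\mathscr{S}|$, since every vertex has out-degree at most $n$ and so BFS runs in time $O(|\mathscr{S}|\cdot n)$. I would show $|\mathscr{S}|\leq(4m\Delta+1)^m\cdot\|b\|_1$ by a sweeping argument: for each fixed $s$ the $\ell_\infty$-box of radius $2m\Delta$ around $sb$ contains at most $(4m\Delta+1)^m$ lattice points, and as $s$ moves from $0$ to $1$ the box translates by $b$, introducing at most $(4m\Delta+1)^{m-1}$ new lattice points per unit of coordinate translation, for $\|b\|_1$ unit steps in total. This bookkeeping is the one place where I expect real care is needed; combined with the previous paragraph it yields the promised $O(m\Delta)^m\cdot\|b\|_1\cdot n$ running time.
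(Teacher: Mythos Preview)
Your proposal is correct and follows the paper's argument essentially verbatim: Steinitz applied to the centred sequence $(v_i-b/t)$ confines all partial sums to an $\ell_\infty$-tube of radius $2m\Delta$ around the segment $[0,b]$, and BFS in the column-digraph on $\mathscr{S}$ decides feasibility in time $O(|\mathscr{S}|\cdot n)$. Your continuous-parameter definition of $\mathscr{S}$ is in fact cleaner than the paper's, which uses discrete centres $(j/t)\,b$ with $t=\|z^*\|_1$ and then silently replaces $t$ by $\|b\|_1$; for the count you flag as needing care, the simplest route is to cover $[0,b]$ by the $\|b\|_1+1$ centres $(j/\|b\|_1)\,b$, so that $\mathscr{S}$ lies in $\|b\|_1+1$ boxes of radius $2m\Delta+1$ and hence $|\mathscr{S}|\le(\|b\|_1+1)(4m\Delta+3)^m=O(m\Delta)^m\cdot\|b\|_1$.
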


We next describe how to tackle the optimization problem~\eqref{eq:2}. 
We introduce weights on the arcs of the digraph $D = (\mathscr{S},A)$. The weight of the arc $xy$ is  $c_i$ if $y-x$ is the $i$-th column of $A$. Down below, we will argue that the longest path in the thereby weighted digraph from $0$ to $b$ corresponds to an optimal solution of \eqref{eq:2}. The longest path problem in
 $D$ can be solved in time $O(|\mathscr{S}|⋅|A|)$ with the Bellman-Ford algorithm~\cite{AHU74}. Since 
 \begin{displaymath}
   |A| ≤ |\mathscr{S}| ⋅ n  
 \end{displaymath}
our discussion below  implies that the integer program~\eqref{eq:2} can be solved in time $O(n ⋅ |\mathscr{S}|^2)$   provided that there do not exist positive cycles reachable from $0$. The next lemma clarifies that such a positive cycle exists if and only if the feasible  integer program~\eqref{eq:2} is unbounded.

\begin{lemma}
  \label{lem:2}
  Suppose that~\eqref{eq:2} is feasible. The integer program~\eqref{eq:2} is unbounded if and only if $D$ contains a cycle of strictly positive length that is reachable from $0$. 
\end{lemma}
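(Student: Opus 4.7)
The plan is to interpret directed cycles in $D$ as non-trivial recession directions of the integer program, and to use the Steinitz lemma to realize such a direction as a closed walk through $\mathscr{S}$ based at $0$.

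For the ``if'' direction I would argue directly. Any directed cycle $x_0 \to x_1 \to \cdots \to x_s = x_0$ in $D$ has consecutive differences $x_{i+1}-x_i$ equal to columns of $A$; letting $y \in \Z^n_{\geq 0}$ count how often each column appears, this gives $y \neq 0$, $Ay = \sum_i (x_{i+1}-x_i) = 0$, and $c^T y$ equal to the (positive) cycle weight. Combined with any feasible solution $z^*$ (which exists by assumption), the vectors $z^* + k y$ for $k \in \N$ are feasible with objective tending to $+\infty$, so \eqref{eq:2} is unbounded. Reachability from $0$ is not used in this direction.

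For the ``only if'' direction, feasibility together with unboundedness and the integrality of the data yields an integer $y \in \Z^n_{\geq 0}$ with $y \neq 0$, $Ay = 0$ and $c^T y > 0$: the recession cone $\{y \geq 0 : Ay = 0\}$ is a rational polyhedral cone, hence generated by integer rays, at least one of which must have positive $c$-value. Let $u_1, \ldots, u_s$ be the multiset of $s = \|y\|_1$ columns of $A$ indexed by $y$; they sum to $0$ and each has $\ell_\infty$-norm at most $\Delta$. Applying Theorem~\ref{thr:3} to $u_1/\Delta, \ldots, u_s/\Delta$ yields a permutation $\sigma$ whose partial sums $U_k = \sum_{i \leq k} u_{\sigma(i)}$ satisfy $\|U_k\|_\infty \leq m\Delta$. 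Since $0$ lies on the segment $[0,b]$ around which $\mathscr{S}$ is constructed as a tube of $\ell_\infty$-radius $2m\Delta$, every $U_k$ belongs to $\mathscr{S}$, and $U_k - U_{k-1} = u_{\sigma(k)}$ is a column of $A$. This produces a closed walk $0 = U_0, U_1, \ldots, U_s = 0$ in $D$ of total arc weight $c^T y > 0$.

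Finally, I would decompose this closed walk into simple directed cycles by iteratively excising a closed sub-walk between two appearances of some repeated vertex until only simple cycles remain; the cycle weights sum to the walk's weight, so at least one cycle has strictly positive weight, and every extracted cycle is reachable from $0$ since it sits on the original closed walk based at $0$. The main technical point is verifying that the Steinitz partial sums land in $\mathscr{S}$; here the looser radius $2m\Delta$ in the definition of $\mathscr{S}$ (rather than $m\Delta$) provides the slack needed to accommodate closed sub-walks hugging the endpoint $0$ of the segment $[0,b]$.
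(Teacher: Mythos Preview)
Your argument is correct and follows essentially the same route as the paper: reduce unboundedness to the existence of an integer recession direction $y\ge 0$ with $Ay=0$ and $c^Ty>0$, then apply the Steinitz lemma with $b=0$ to realize $y$ as a closed walk through $\mathscr{S}$ based at $0$. You supply more detail than the paper does---an explicit ``if'' direction, the verification that the Steinitz partial sums sit in $\mathscr{S}$ via the slack in the radius $2m\Delta$, and a decomposition of the closed walk into simple cycles---whereas the paper is content with a (not necessarily simple) cycle and leaves these checks implicit.
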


\begin{proof}
  It follows from the theory of integer linear programming~\cite{Schrijver86} that~\eqref{eq:2} is unbounded if and only if there exists an  integer solution  of $Ax = 0, \, x≥0,\, c^Tx >0$. Let $r^* ∈ ℤ_{≥0}^n$ be such a solution. Using the Steinitz-type-lemma in the spirit of the rearrangement~\eqref{eq:3} but with $b=0$, $r^*$ corresponds to a (not necessarily simple) cycle in $D$  of positive length starting at $0$. This proves the lemma. 
\end{proof}

\begin{remark}
\label{rem:1}
  The reader might have noticed that $D$
  contains a positive simple cycle that is reachable from $0$
  if and only if there exists a positive simple cycle in $D$
  containing $0$.
  The two cycles however might not be a translation of each other.
\end{remark}

The algorithm to solve~\eqref{eq:2} is now as follows. We first check integer feasibility of~\eqref{eq:2}. Then we run a single-source longest path algorithm from $0$ to the other nodes of $D$, in particular to $b$. If the algorithm detects a cycle of positive weight, we assert that \eqref{eq:2} is unbounded. Otherwise, the longest path form $0$ to $b$  corresponds to an optimal solution of~\eqref{eq:2}. We therefore have proved the following theorem.

\begin{theorem}
  \label{thr:5}
  The integer program~\eqref{eq:2} can be solved in time
  \begin{displaymath}
      n ⋅ O(m⋅Δ)^{2⋅m} ⋅ \|b\|_1^2 
  \end{displaymath}
  where $Δ$
  is an upper bound on all absolute values of entries in $A$.
\end{theorem}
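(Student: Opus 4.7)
The plan is to assemble the pieces already laid out in Section~\ref{sec:fast-dynam-progr}. The core object is the digraph $D = (\mathscr{S}, A)$ with weights on arcs: an arc $xy$ corresponding to the $i$-th column of $A$ carries weight $c_i$. I would first invoke Theorem~\ref{thr:1} to decide feasibility in time $O(m\Delta)^m \cdot \|b\|_1 \cdot n$; if the instance is infeasible we are done. Otherwise I run a single-source longest-path computation from $0$ in $D$ using Bellman--Ford, which in $O(|\mathscr{S}| \cdot |A|)$ time either returns the longest $0 \to b$ walk or detects a cycle of strictly positive weight reachable from $0$. By Lemma~\ref{lem:2}, the latter case is equivalent to~\eqref{eq:2} being unbounded, so we may report unboundedness in that case.

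The central content, beyond invoking these earlier results, is the correspondence between optimal IP solutions and optimal $0 \to b$ walks in $D$. One direction is already established: given an optimal $z^* \in \mathbb{Z}^n_{\geq 0}$ with $Az^* = b$, the Steinitz rearrangement~\eqref{eq:3} of the multiset of columns (with multiplicity $z_i^*$) produces a walk from $0$ to $b$ whose partial sums all lie in $\mathscr{S}$, and the total weight of this walk equals $c^T z^*$. Conversely, any $0 \to b$ walk in $D$ decomposes into arcs whose displacement vectors are columns of $A$; letting $z_i$ be the number of times the $i$-th column is used yields $z \in \mathbb{Z}_{\geq 0}^n$ with $Az = b$ and $c^Tz$ equal to the walk weight. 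Hence the maximum weight of a $0 \to b$ walk equals the optimum of~\eqref{eq:2}, and in the bounded case the Bellman--Ford output is an optimal IP solution.

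For the running time, recall $|\mathscr{S}| \leq (4m\Delta+1)^m \cdot \|b\|_1$ and $|A| \leq |\mathscr{S}| \cdot n$, so Bellman--Ford runs in
\begin{displaymath}
   O(|\mathscr{S}| \cdot |A|) \;=\; O\bigl(n \cdot |\mathscr{S}|^2\bigr) \;=\; n \cdot O(m\Delta)^{2m} \cdot \|b\|_1^2.
\end{displaymath}
Using $\|b\|_1 \leq m \cdot \|b\|_\infty$, the bound becomes $n \cdot O(m\Delta)^{2m} \cdot m^2 \cdot \|b\|_\infty^2 = (m\Delta)^{O(m)} \cdot \|b\|_\infty^2$, absorbing the $n$ and the $m^2$ into the $(m\Delta)^{O(m)}$ factor (since we may assume $n \leq (m\Delta)^{O(m)} \cdot \|b\|_\infty$, as otherwise two columns of $A$ are equal and one variable can be discarded, bringing $n$ below $(2\Delta+1)^m$).

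The only step that requires genuine care is the equivalence between walks in $D$ and feasible IP solutions in the \emph{weighted} setting: the Steinitz rearrangement must be applied to the multiset of columns with multiplicity, and one has to verify that the weight of the resulting walk equals $c^T z^*$ regardless of the order, which is immediate because each column contributes its cost independently of where it appears in the sequence. Handling the unbounded case cleanly via Lemma~\ref{lem:2} is the main conceptual subtlety, but since Bellman--Ford natively reports positive-weight cycles reachable from the source, this integrates with the algorithm at no additional asymptotic cost.
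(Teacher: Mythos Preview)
Your proposal is correct and follows essentially the same approach as the paper: feasibility via Theorem~\ref{thr:1}, then Bellman--Ford on the weighted digraph $D$, invoking Lemma~\ref{lem:2} to equate positive reachable cycles with unboundedness, and the Steinitz-based correspondence between $0\to b$ walks and feasible solutions. Your justification for absorbing $n$ into $(m\Delta)^{O(m)}$ via the duplicate-column argument (reducing to $n\le(2\Delta+1)^m$) is in fact more explicit than what the paper writes; note that the extra factor $\|b\|_\infty$ you mention there is unnecessary.
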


\begin{remark}
  \label{rem:2}
  \begin{enumerate}[a)]
  \item For an integer program in standard form, without upper bounds on the variables, we can assume that $A$ does not have repeated columns. Hence $n = O(Δ^m)$. The running time bound in Theorem~\ref{thr:5} is $(m ⋅ Δ)^{O(m)} ⋅ \|b\|_∞^2$. 
  \item The longest path problem runs in linear time if the digraph $D$ does not have any cycles at all. This is for example the case when $A$ has only non-negative entries. In this case one has a running time of  $O(m⋅Δ)^m ⋅ \|b\|_1 ⋅ n$ for the integer program~\eqref{eq:2}. 
A well known example of such an integer program is the \emph{configuration IP} for scheduling, see, e.g. \cite{Hochbaum:1987:UDA:7531.7535,jansen2010eptas,jansen_et_al:LIPIcs:2016:6212}.
\end{enumerate}
\end{remark}

\begin{remark}
  \label{rem:4}
  For the case in which $Δ$
  is an upper bound on the absolute values of the entries of both $A$
  and $b$ the set $\mathscr{S}$ contains at most $(4\,m ⋅ Δ+1)^m$ elements and the integer program \eqref{eq:2} can be solved in time $n ⋅ O(m ⋅ Δ)^{2\,m}$ and in time  $n ⋅ O(m ⋅ Δ)^{m}$ if the digraph does not have any cycles. 
\end{remark}

  




\section{Proximity in the $\mathbf{\ell_1}$-norm} 
\label{sec:proximity:-l_1-norm}

In this section, we  provide the results \ref{item:5}) and 
\ref{item:7}). 
From now on we consider integer programs in standard form with upper bounds on the variables, where the absolute values of $A$ only need to be bounded by some integer $Δ$.  In other words, we consider a problem of the form 
\begin{equation}
\label{eq:6}
  \max \{ c^T x : Ax = b, \, 0 ≤x ≤u , \, x ∈ ℤ^n \}  
\end{equation} 
where $A ∈ ℤ^{m×n}$, $b ∈ℤ^m$ and $c ∈ ℤ^n$ and $u ∈ℕ^n$ such that $|a_{ij}|≤Δ$ for each $i,j$. 
We are interested in the distance between an optimal vertex of the LP-relaxation of~\eqref{eq:6} and a closest integer optimum $z^*$ in the $\ell_1$-norm. 

\paragraph{A previous bound} that has been useful in many algorithmic applications, see for example~\cite{sanders2009online} was shown by Cook et al.~\cite{CookGerardsSchrijverTardos86}. In its full generality, it is concerned with  the distance in the $\ell_∞$  norm in the setting of an integer program in inequality form 
\begin{equation}
  \label{eq:4}
  \max \{ c^Tx : Ax ≤b, \, x ∈ℤ^n\}. 
\end{equation}
We suppose that $A$ and $b$ are integral and that \eqref{eq:4} is feasible and bounded. 
 Cook et al.~\cite{CookGerardsSchrijverTardos86} show that for any optimal solution $x^*$ of the linear programming relaxation there exists an optimal solution $z^*$ of the integer program with 
 \begin{equation}
   \label{eq:5}
   \| x^* - z^* \|_∞ ≤ n ⋅ δ,
 \end{equation}
where $\delta $ is the largest absolute value of the determinant of any square submatrix of $A$. By the Hadamard bound, see, e.g.\cite{Schrijver86}, $δ$ is bounded by $n^{n/2} ⋅Δ^n$, where $Δ$ is, as before, an upper bound on the absolute values of the entries of $A$. 

Applied to an integer program in standard form~\eqref{eq:2} this
result implies that, for a given optimal linear solution $x^*$
there exists an integer optimal solution $z^*$
such that $\|z^*-x^*\|_1 \leq n^2 δ$.
Since the Hadamard bound implies $δ ≤m^{m/2}Δ^m$
\begin{equation}
  \label{eq:15}
  \|z^*-x^*\|_1 ≤ n^2 ⋅ m^{m/2}Δ^m. 
\end{equation}

\paragraph{Using the Steinitz lemma,}  we   show next that
\begin{displaymath}
  \|z^*-x^*\|_1 ≤ m ⋅ (2⋅m⋅Δ+1)^m.  
\end{displaymath}
We will see in a later section how this leads to  algorithms for integer programs in standard form with upper bounds on the variables. 
In the following, let $x^*$ and $z^*$ be  optimal solutions of the linear programming relaxation of \eqref{eq:6} and of the integer program \eqref{eq:6}  respectively. A vector $y ∈ℤ^n$ is called a \emph{cycle} of $(z^* - x^* )$ if $A \,y  = 0$ and 
\begin{equation}
  \label{eq:7}  
   |y_i| ≤ |(z^* - x^*)_i| \,\text{ and }\, y_i ⋅(z^* - x^*)_i ≥0\, \text{ for each }  i. 
 \end{equation}
 
\begin{lemma}
  \label{lem:3}
  Let $y$ be a cycle of $(z^* - x^* )$, then the following assertions hold. 
  \begin{enumerate}[i)]
  \item $z^* - y$ is a feasible integer solution of \eqref{eq:6}. \label{item:1}
  \item  $x^* +  y$ is a feasible solution of the linear programming relaxation of \eqref{eq:6}. \label{item:2}
  \item One has $c^T y ≤0$. \label{item:3}
  \end{enumerate}
\end{lemma}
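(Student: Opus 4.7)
The plan is to treat the three assertions in the stated order, exploiting the defining properties of a cycle coordinate by coordinate.

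For \ref{item:1}), I would first observe that $A(z^* - y) = Az^* - Ay = b$ and that $z^* - y \in \mathbb{Z}^n$ because both $z^*$ and $y$ are integral. The only real content is the box constraint $0 \leq z^* - y \leq u$, which I would verify by a case distinction on the sign of $(z^* - x^*)_i$. If $(z^* - x^*)_i \geq 0$, then \eqref{eq:7} gives $0 \leq y_i \leq z^*_i - x^*_i \leq z^*_i$, so $0 \leq z^*_i - y_i \leq z^*_i \leq u_i$. If $(z^* - x^*)_i \leq 0$, then $y_i \leq 0$ and $-y_i \leq x^*_i - z^*_i \leq u_i - z^*_i$, yielding $z^*_i \leq z^*_i - y_i \leq u_i$. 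Either way $z^* - y$ lies in $[0,u]$.

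For \ref{item:2}), the equation $A(x^* + \epsilon y) = Ax^* + \epsilon Ay = b$ is immediate. The same coordinate-wise case analysis as above shows that in fact $x^* + y$ itself satisfies the box constraints: when $(z^* - x^*)_i \geq 0$, one has $0 \leq x^*_i \leq x^*_i + y_i \leq z^*_i \leq u_i$, and symmetrically when $(z^* - x^*)_i \leq 0$. Hence any $\epsilon \in (0,1]$ works; by convexity of the feasible region of the LP relaxation one could alternatively just pick $\epsilon$ small.

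For \ref{item:3}), I would invoke the LP-optimality of $x^*$. By \ref{item:2}), the point $x^* + \epsilon y$ is feasible for the LP relaxation of \eqref{eq:6}, so $c^T (x^* + \epsilon y) \leq c^T x^*$, which rearranges to $\epsilon \, c^T y \leq 0$, hence $c^T y \leq 0$. (One could analogously combine \ref{item:1}) with integer-optimality of $z^*$ to conclude $c^T y \geq 0$, which would actually yield $c^T y = 0$, but the weaker inequality is all the statement demands.)

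No step is really an obstacle; the lemma is essentially a structural bookkeeping result. The mild subtlety, which the case analysis has to handle cleanly, is that the sign condition $y_i \cdot (z^* - x^*)_i \geq 0$ together with the magnitude bound $|y_i| \leq |(z^* - x^*)_i|$ is exactly what ensures that subtracting $y$ from $z^*$ stays in the box while adding $\epsilon y$ to $x^*$ also stays in the box — the two "movements" of $y$ go in opposite directions through $[0,u]$, which is precisely why a cycle is useful for the proximity argument that follows.
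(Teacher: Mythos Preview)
Your proof is correct and follows essentially the same coordinate-wise case analysis as the paper. The only cosmetic difference is in \ref{item:2}): the paper merely observes that the relevant bound constraints are not tight at $x^*$ and so some $\epsilon>0$ exists, whereas you push the same inequalities a little further to show that in fact $x^*+y$ already lies in $[0,u]$, i.e.\ $\epsilon=1$ works.
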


\begin{proof}
  We show \ref{item:1}) and \ref{item:2}). Since $A\,y=0$ and $y$ is integral, 
  we only need to verify that the bounds on the variables
  are satisfied.

  If $(z^* - x^*)_i <0$, then $y_i ≤0$ and since $z^*$ and $x^*$ are feasible, one has 
  \begin{displaymath}
    0 ≤ z_i^* - y_i  ≤  z_i^* - (z^* - x^*)_i 
                 =  x^*_i 
                 ≤  u_i
  \end{displaymath}
              and 
              \begin{displaymath}               
u_i ≥    x_i^* + y_i  ≥   x_i^* + (z^* - x^*)_i 
                 =  z^*_i 
                 ≥  0.
              \end{displaymath}
              If $(z^* - x^*)_i > 0$   is analogous.         

To see~\ref{item:2}) note that $y_i>0$ implies that $z^*_i > x^*_i$ and thus $x^*_i$ is not at the upper bound $u_i$. If $y_i<0$ then $z^*_i < x^*_i$ which means that the lower bound $0≤x_i$ is not tight at $x^*$. Therefore, there exists an $ε>0$ such that $x^* +ε y$ is a feasible solution of the linear program. 

The assertion~\ref{item:3}) follows from the optimality of $x^*$ and~\ref{item:2}). 
\end{proof}

\begin{lemma}
  \label{thr:4}
  Let $x^*$
  be an optimal solution of the linear programming relaxation
  of~\eqref{eq:6} and let $z^*$
  be an optimal integer solution of \eqref{eq:6} such that
  $\|z^*-x^*\|_1$ is minimal. There does not exist a cycle of   $z^*-x^*$. 
\end{lemma}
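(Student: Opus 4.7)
My plan is to argue by contradiction: assume that a (necessarily nonzero) cycle $y$ of $z^{*}-x^{*}$ exists, and derive a contradiction from Lemma~\ref{lem:3}. Applying that lemma, I obtain \ref{item:1}) integer-feasibility of $z^{*}-y$, \ref{item:2}) LP-feasibility of $x^{*}+\varepsilon y$ for some $\varepsilon>0$, and \ref{item:3}) the crucial inequality $c^{T}y\le 0$. The argument then splits on whether $c^{T}y$ is strictly negative or zero; since \ref{item:3}) excludes $c^{T}y>0$, these two cases are exhaustive.

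If $c^{T}y<0$, then by \ref{item:1}) the vector $z^{*}-y$ is a feasible integer solution, and
\[
c^{T}(z^{*}-y) = c^{T}z^{*} - c^{T}y > c^{T}z^{*},
\]
contradicting the optimality of $z^{*}$. If $c^{T}y=0$, then $z^{*}-y$ is itself an optimal integer solution, and I would show that it lies strictly closer to $x^{*}$ in the $\ell_{1}$-norm, contradicting the minimality of $\|z^{*}-x^{*}\|_{1}$. For this I exploit the defining sign conditions of a cycle: since $y_{i}(z^{*}-x^{*})_{i}\ge 0$ and $|y_{i}|\le|(z^{*}-x^{*})_{i}|$ for every $i$, subtracting $y_{i}$ from $(z^{*}-x^{*})_{i}$ shortens that coordinate in absolute value by exactly $|y_{i}|$, so that $|(z^{*}-x^{*})_{i}-y_{i}| = |(z^{*}-x^{*})_{i}|-|y_{i}|$. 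Summing over $i$ and using $y\ne 0$ yields
\[
\|(z^{*}-y)-x^{*}\|_{1} = \|z^{*}-x^{*}\|_{1}-\|y\|_{1} < \|z^{*}-x^{*}\|_{1},
\]
which is the desired contradiction.

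The only delicate point is the componentwise $\ell_{1}$ telescoping in the second case, but this is immediate from the sign and magnitude constraints in the cycle definition and requires no additional machinery. Thus the main substance of the proof is already contained in Lemma~\ref{lem:3}, and I do not anticipate any real obstacle beyond carefully tracking signs and separating the two cases $c^{T}y<0$ and $c^{T}y=0$.
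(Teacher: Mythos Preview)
Your proof is correct and follows essentially the same approach as the paper's: the paper simply notes that by parts~\ref{item:1}) and~\ref{item:3}) of Lemma~\ref{lem:3}, $z^{*}-y$ is again an optimal integer solution, and then asserts $\|z^{*}-y-x^{*}\|_{1}<\|z^{*}-x^{*}\|_{1}$ without further justification. Your version is more explicit---splitting into the cases $c^{T}y<0$ and $c^{T}y=0$ and spelling out the coordinatewise identity $|(z^{*}-x^{*})_{i}-y_{i}|=|(z^{*}-x^{*})_{i}|-|y_{i}|$---but the underlying argument is identical.
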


\begin{proof}
  Suppose that $y$ is a cycle of  $z^*-x^*$. By \ref{item:1}) and \ref{item:3}) of Lemma~\ref{lem:3},  $z^* - y$ is also an optimal solution of the integer program~\eqref{eq:6}.  But $\|z^* - y - x^* \|_1 < \|z^* -  x^* \|_1$ contradicting the minimality of $\|z^* -  x^* \|_1$. 
\end{proof}

We are now ready to apply the Steinitz-type lemma to derive a new bound on the  $\ell_1$-distance  between $x^*$ and $z^*$. 

\begin{theorem}
	\label{thr:2}
        Let $x^*$ be an optimal vertex solution of the linear programming relaxation of~\eqref{eq:6}. 
	There exists  an optimal solution $z^*$ of the integer program~\eqref{eq:6} such that 
        \begin{displaymath}
          	\|z^*-x^*\|_1 \leq m⋅ (2\, m⋅ Δ+1)^m .
        \end{displaymath}
        Here, $Δ$ is an upper bound on the absolute values of the entries in $A$.         
\end{theorem}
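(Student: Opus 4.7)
The plan is to prove the contrapositive: assuming $\|z^* - x^*\|_1$ exceeds the claimed bound for a carefully chosen $z^*$, I will produce a cycle of $d = z^* - x^*$ and contradict Lemma~\ref{thr:4}.

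First, I select $z^*$ among the optimal integer solutions of~\eqref{eq:6} that minimizes $\|z^* - x^*\|_1$; by Lemma~\ref{thr:4}, no cycle of $d$ exists. Since $x^*$ is a basic vertex of the LP relaxation over $\{Ax = b,\; 0 \leq x \leq u\}$, at most $m$ coordinates of $x^*$ can be strictly between their lower and upper bounds, and on all other coordinates $x^*_i \in \{0, u_i\} \subset \mathbb{Z}$. Let $F \subseteq \{1, \ldots, n\}$ be the set of fractional coordinates of $x^*$, so $|F| \leq m$ and $d_i \in \mathbb{Z}$ for every $i \notin F$.

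Next, I assemble a sequence of vectors in $\mathbb{R}^m$ tailored to the Steinitz lemma. For each index $i$ with $d_i \neq 0$, include $\lfloor |d_i| \rfloor$ copies of the integer column $\sign(d_i)\, a^{(i)}$; additionally, for each $i \in F$, include one fractionally-weighted copy $\sign(d_i)\, \{|d_i|\}\, a^{(i)}$. The resulting list $v_1, \ldots, v_N$ satisfies: (i) the vectors sum to $Ad = 0$; (ii) each has $\ell_\infty$-norm at most $\Delta$; (iii) the number of fractional-weight vectors is at most $|F| \leq m$; and (iv) $\|d\|_1 < N$. I then invoke Theorem~\ref{thr:3} in the $\ell_\infty$-norm (after rescaling by $\Delta$) to obtain a permutation $\pi$ such that every partial sum $S_j = \sum_{k \leq j} v_{\pi(k)}$ lies in $[-m\Delta, m\Delta]^m$.

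The key combinatorial step is a pigeonhole on partial sums grouped into \emph{segments}: two indices $0 \leq j < k \leq N$ are in the same segment when no fractional-weight vector appears among $v_{\pi(j+1)}, \ldots, v_{\pi(k)}$. With at most $|F|$ fractional vectors, there are at most $|F|+1 \leq m+1$ segments; within a segment, all $S_j$ lie in a common coset of $\mathbb{Z}^m$ intersected with the cube $[-m\Delta, m\Delta]^m$, which contains at most $(2m\Delta+1)^m$ points. A same-segment coincidence $S_j = S_k$ would produce the signed column-count vector $y \in \mathbb{Z}^n$ read off from positions $j+1, \ldots, k$, with $Ay = 0$, $y_i d_i \geq 0$, and $|y_i| \leq \lfloor |d_i| \rfloor \leq |d_i|$; this $y$ is a cycle of $d$, contradicting Lemma~\ref{thr:4}. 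Hence $N+1 \leq (|F|+1)(2m\Delta+1)^m$, and after the sharper accounting discussed below one obtains $\|z^* - x^*\|_1 < N \leq m(2m\Delta+1)^m$. The main obstacle I foresee is precisely reducing the naive factor $(|F|+1) \leq m+1$ to $m$: for this I would exploit that $d \neq 0$ forces $|F| \geq 1$ (otherwise $d$ itself would be an integer cycle of $d$) together with a merging of the first and last segments using $S_0 = 0 = S_N$, so that only $|F|$ effectively independent cosets enter the pigeonhole count.
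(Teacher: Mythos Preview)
Your approach is essentially the paper's: pick $z^*$ minimizing $\|z^*-x^*\|_1$, decompose $d=z^*-x^*$ into signed integer copies of columns plus at most $m$ fractional residue vectors, apply Steinitz in $\ell_\infty$, and pigeonhole partial sums against the cube $[-m\Delta,m\Delta]^m$ coset by coset. Your segment language is actually a cleaner way to say what the paper does.

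The one genuine gap is precisely where you flag it. Merging the first and last segments is correct (both lie in the coset $\mathbb{Z}^m$), but for the merge to cut the factor from $|F|+1$ to $|F|$ you must show that a collision $S_j=S_k$ with $j$ in the \emph{first} segment and $k$ in the \emph{last} also produces a cycle. Your stated mechanism ("same-segment coincidence $\Rightarrow$ cycle") does not cover this, because the interval $(j,k]$ then contains \emph{all} the fractional vectors, and the column-count read off from it is not integral. The missing observation—exactly the paper's second case—is to pass to the \emph{complement}: the positions $\{1,\dots,j\}\cup\{k+1,\dots,N\}$ contain only integer columns, their total is $S_j+(S_N-S_k)=0$, and since $j\geq 1$ the induced integer vector $y$ is nonzero and sign-conformal with $d$, hence a cycle of $d$, contradicting Lemma~\ref{thr:4}. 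Once you add this complement argument, the merged first/last group together with the $|F|-1$ middle segments gives $|F|$ groups, each of size at most $(2m\Delta+1)^m$, and $\|d\|_1\leq N\leq |F|\,(2m\Delta+1)^m\leq m(2m\Delta+1)^m$ follows.
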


\begin{proof}
  Let $z^*$ be an optimal integer solution such that $\|z^* - x^*\|_1$ is minimal. In the following we use the notation $⌊x^*⌉$ for the vector that one obtains from  $x^*$ by rounding each component towards the corresponding component of $z^*$. More precisely, the $i$-th component of $⌊x^*⌉$ is set to 
  \begin{displaymath}
⌊x^*⌉_i =     \begin{cases}
      ⌈x^*⌉_i & \text{ if } z^*_i > x^*_i \text{ and } \\
      ⌊x^*⌋_i & \text{ if } z^*_i ≤ x^*_i 
    \end{cases}    
  \end{displaymath}
and we denote the rest by $\{x^*\} = x^* - ⌊x^*⌉$.  Clearly, one has 
  \begin{equation}
    \label{eq:9}
    A (z^* - ⌊x^*⌉) - A\{x^*\} = 0. 
  \end{equation}
  We are now again in the setting of the Steinitz-lemma where we have a sequence of vectors 
  \begin{equation}
    \label{eq:11}
    v_1,\dots,v_t,- A\{x^*\}
  \end{equation}
that sum up to zero. More precisely this sequence is constructed as follows. Start with the empty sequence. For each column index $i$ append $|(z^* - ⌊x^*⌉)_i|$ copies of  $\sign((z^* - ⌊x^*⌉)_i) ⋅ a_i$ to the list, where $a_i$ is the $i$-th column of $A$. Finally append $-A\{x^*\}$ to the list. 
Since $x^*$ has at most $m$ positive entries, we conclude that $\|-A\{x^*\}\|_∞ ≤ Δ ⋅m$ and that there are integer vectors $w_1,\dots,w_m$ of $\ell_∞$-norm at most $Δ$ with 
\begin{displaymath}
  -A\{x^*\} = w_1+ \cdots +w_m. 
\end{displaymath}
This means that the sequence of vectors~\eqref{eq:11} can be expanded to a sequence
\begin{equation}
  \label{eq:13}
  v_1,\dots,v_t,w_1,\dots,w_m
\end{equation}
where each vector is at most of $\ell_∞$-norm $Δ$ and that sum up to the zero vector. Observe that $t = \|z^* - ⌊x^*⌉\|_1$ and that $t+m ≥\|z^* - x^*\|_1$. 
The Steinitz Lemma implies that the sequence~\eqref{eq:13} can be re-arranged in such a way 
\begin{equation}
  \label{eq:16}
  u_1,\dots,u_{t+m}
\end{equation}
that for each $1 ≤ k ≤ t+m$ the partial sum $p_k = ∑_{i=1}^k u_i$ satisfies 
\begin{equation}
  \label{eq:18}
\|p_k\|_∞  ≤ m  Δ. 
\end{equation}
We will now argue that there cannot be indices  $1≤ k_1 < \cdots < k_{m+1} ≤ t+m$ with 
\begin{equation}
  \label{eq:19}
  p_{k_1} = \cdots = p_{k_{m+1}}, 
\end{equation}
which implies that $t+m$ is bounded by $m$ times the number of integer points of norm at most $m  ⋅ Δ$ and therefore 
\begin{displaymath}
  \|z^* - x^*\|_1 ≤ m ⋅ (2 ⋅ m⋅Δ +1)^m. 
\end{displaymath}
Assume to this end that there exist $m+1$ indices $1≤ k_1 < \cdots < k_{m+1} ≤ t+m$ satisfying~\eqref{eq:19}.
This yields a partition of the sequence into $m+1$ nonempty pieces that sum up to zero, namely:
\begin{displaymath}
  u_1,\dots,u_{k_1}, u_{k_{m+1}+1},\dots,u_{t+m} 
  \end{displaymath}
  and
  \begin{displaymath}
    u_{k_j+1},\dots,u_{k_{j+1}}, \,  j=1,\dots,m. 
  \end{displaymath}
One of these subsequences does not contain an element from $\{w_1,\dots,w_m\}$, and hence 
 are columns of $A$ or negatives thereof.  This corresponds to a cycle $y$ of $z^* - x^*$ which, by the minimality of $\|z^* - x^*\|_1$ and  Lemma~\ref{thr:4} is impossible. 

\end{proof}

\subsection{Integrality gaps of integer programs} 
\label{sec:integr-gaps-integ}

Our bound of Theorem~\ref{thr:2} directly leads to a bound on the \emph{(absolute) integrality gap} of integer programs.  This gap is  $c^T(x^*-z^*)$ and 
 can, via Theorem~\ref{thr:2}, be bounded by 
 \begin{equation}
   \label{eq:17}
   c^T(x^*-z^*) \leq \|c\|_\infty \|z^*-x^*\|_1 \leq \|c\|_\infty m ⋅ (2 ⋅ m ⋅Δ+1)^m  .
 \end{equation}
An integer program~\eqref{eq:2} is called an \emph{unbounded knapsack problem} if $m$ =1. In this case, 
Aliev et al.~\cite{aliev2016integrality}  show that 
one has 
\begin{equation}
  \label{eq:28}
  c^T(x^*-z^*) \leq 2⋅\|c\|_∞ ⋅Δ
\end{equation}
which is asymptotically our bound for $m=1$. They derived their bound using methods from the geometry of numbers.  A careful analysis of our proof in the case $m=1$ also yields the  bound~\eqref{eq:28} exactly. More precisely, this follows since we can choose $u_1 = w_1$  in the Steinitz sequence~(\ref{eq:16}). This is special about the one-dimensional case, any vector can be chosen as the first element. Clearly, $w_1$ cannot be re-visited as a partial sum. This implies $\|z^* -x^*\|_1 ≤(2 Δ +1) -1 = 2Δ$. 


\section{Algorithmic implications}
\label{sec:integ-progr-with}

We now devote our attention to dynamic programming algorithms for integer programs in standard form with upper bounds on the variables and where $|a_{ij} |≤ Δ$ for each $i,j$. This setting has received  considerable attention in the approximation algorithm community, especially for scheduling problems and the respective \emph{configuration LPs}, see for example \cite{sanders2009online,jansen2010eptas,jansen_et_al:LIPIcs:2016:6212}. 

Our proximity result can now be used in a dynamic programming approach to solve an integer program in standard form with upper bounds on the variables~\eqref{eq:6}. We first compute an optimal basic solution $x^*$ of the LP-relaxation of~\eqref{eq:6}. 
In the following we denote our bound on  $\|z^* - x^*\|_1$ by $L'_1 = m ⋅ (2⋅ m⋅Δ+1)^m$. 
Theorem~\ref{thr:2} reveals that there exists an optimal integer solution $z^*$ with 
\begin{displaymath}
  \|z^* -  ⌊x^*⌋ \|_1 ≤ \|z^* -  x^* \|_1 + \|x^* -  ⌊x^*⌋ \|_1 ≤  L'_1 +m =: L_1.
\end{displaymath}
After the variable transformation $y = z -  ⌊x^*⌋$ one has  to solve an integer program of the form
\begin{equation}
  \label{eq:20}
  \begin{array}{lll}
    \max c^T y & \text{ s.t. } \vspace{.1cm} \\
&    A \, y  =  A ⋅ \{x^*\} \\
 &   -l^* ≤ y ≤ u^*\\
  &  \|y\|_1 ≤ L_1  \\
  &  y ∈ ℤ^n
  \end{array}
\end{equation}
where $l^* = \min\{ L_1, ⌊x^*⌋\}$ and $u^* = \min\{ L_1 , u - ⌊x^*⌋\}$. Notice that $\|l^*\|_∞ ≤ L_1$ and $\|u^*\|_∞ ≤ L_1$. The potential of the new proximity bound lies in the constraint on the $\ell_1$-norm. 
For   $y ∈ℤ^n$ that satisfies  $\|y\|_1 ≤ L_1$ one has for each $1 ≤ k ≤ n$ 
\begin{equation}
  \label{eq:22}
  \|∑_{i=1}^k y_i ⋅ a_i  \|_∞ ≤ Δ ⋅  L_1.  
\end{equation}
Let $U ⊆ ℤ^m$ be the set of integer vectors of infinity norm at most $Δ ⋅  L_1 $. The cardinality of $U$ is equal to  
\begin{equation}
  \label{eq:23}
  |U | = (2⋅ Δ \,L_1+1)^m = O(m ⋅ Δ)^{m⋅(m+1)}. 
\end{equation}
To find the optimal $y^*$ we build the following acyclic directed graph, see Figure~\ref{fig:2}. The nodes of the graph consist of  a starting node $s=0$  and a target node $t = A⋅\{x^*\}$. Furthermore, we have $n-1$ copies of the set  $U$ that we denote by $U_1,...,U_{n-1}$. The arcs are as follows. 

There is an arc from $s$ to a node $v ∈ U_1$ if there exists an integer $y_1$ such that 
\begin{displaymath}
   v = y_1⋅ a_1 \text{ and }  -l^*_1 ≤y_1≤u^*_1
\end{displaymath}
holds. Again, $a_1$ denotes the first column of $A$. The weight of the arc is  $c_1⋅y_1$.  
There is an arc from a  node $u ∈ U_{i-1}$ to a node  $v ∈ U_{i}$ if there exists an integer $y_i$ such that 
\begin{displaymath}
  v-u = y_i ⋅a_i \text{ and } -l^*_i ≤y_i≤u^*_i
\end{displaymath}
holds. The weight of this arc is $c_i ⋅ y_i$. 
Finally, there is an arc from $u ∈ U_{n-1}$ to $t$ of weight $y_n ⋅ c_n$ if 
\begin{displaymath}
  A\{x^*\} - u = y_n ⋅a_n \text{ and } -l^*_n ≤y_n≤u^*_n
\end{displaymath}
holds for some integer $y_n$. 
Clearly, a longest path in this graph corresponds to an optimal solution $y^*$ of the integer program~\eqref{eq:6}. 
The out-degree of each node is bounded by $u_i^* + l_i^* ≤ 2 ⋅ L_1 +1$. Therefore,  
the number of arcs is bounded by 
\begin{equation}
  \label{eq:24}
  n ⋅ |U| ⋅ (2 ⋅ L_1 +1 ) = n ⋅ O(m)^{(m+1)^2} ⋅ O(Δ)^{m ⋅ (m+2)} 
\end{equation}
which would lead to a corresponding running time of $n ⋅ O(m)^{(m+1)^2} ⋅ O(Δ)^{m ⋅ (m+2)}$ since longest path in an acyclic digraph can be computed in linear time in the number of nodes and arcs. 

However, a standard technique can be applied to significantly decrease
the number of arcs. This idea is based on the binary representation of
an integer and is as follows. Imagine that, for each interval $[-L,U]$ with $L,U ∈ ℕ$,  
there exist a number $k = O(\log^2 (U+L))$ and integers
\begin{displaymath}
  s_1,\dots,s_k 
\end{displaymath} such that
\begin{enumerate}[i)]
\item For each $z ∈ [-L,U]$ there exist $y_1,\dots,y_k ∈ \{0,1\}$ such that \label{item:8}
  \begin{displaymath}
    ∑_{j=1}^k y_j ⋅s_j = z. 
  \end{displaymath} 
\item  For each choice of $y_1,\dots,y_k ∈ \{0,1\}$ one has  \label{item:9}
  \begin{displaymath}
    ∑_{j=1}^k y_j ⋅s_j ∈ [-L,U].
  \end{displaymath} 
\end{enumerate}
We can then replace the part of the digraph connecting  $U_{i-1}$ and $U_i$ with $O(\log^2 (l_i^* + u_i^*))$  copies of $U$. 
Each copy is associated to a binary variable $y^{i}_j$ and an integer $s^{i}_j$ corresponding to the construction for the interval $[-l_i^*,u_i^*]$.   We order them arbitrarily and have an arc from a node $u$ from one copy of $U$ to the node $v$ of its successor of weight zero, if $u=v$ and of weight 
$c_i ⋅ s^{i}_j$ if the successor copy is associated to the variable $y^{i}_j$ and $v = u + a_i ⋅ s^{i}_j$. 
In this way, the out-degree of each node is at most two and the total number of nodes and arcs is 
\begin{displaymath}
  n ⋅ O(\log^2 L_1) ⋅ |U| = n\,m \,   O( \log m⋅Δ  ⋅ (m \, Δ)^{m ⋅(m+1)}), 
\end{displaymath}
 where  we assume $Δ≥2$. 
\begin{figure*} 
  \begin{center}
    \includegraphics[height=1.8cm]{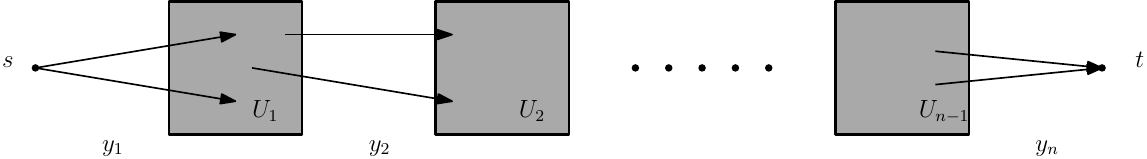}
  \end{center}
  \caption{An illustration of the directed acyclic graph to solve the integer program~\eqref{eq:6}.}
   \label{fig:2}
  \end{figure*}
We therefore have the following result. 
\begin{theorem}
  \label{thr:6}
  An integer program of the form~\eqref{eq:6} can be solved in time
  \begin{displaymath}
    n ⋅  O(m)^{(m+1)^2} ⋅ O(Δ)^{m⋅(m+1)} ⋅ \log^2 (m⋅Δ)
  \end{displaymath}
  if each component of $A$ is bounded by $Δ$ in absolute value. 
\end{theorem}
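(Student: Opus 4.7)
The plan is to follow the construction sketched immediately before the theorem, the only work being to verify the arithmetic of the running time and justify the graph we build. First, I would solve the LP-relaxation of \eqref{eq:6} to obtain an optimal vertex $x^*$, which takes time polynomial in the input (and negligible compared to the claimed bound). By Theorem~\ref{thr:2}, there exists an optimal integer solution $z^*$ with $\|z^* - x^*\|_1 \le L_1 \coloneqq m(2m\Delta+1)^m$; after the shift $y = z - \lfloor x^* \rfloor$ the problem reduces to~\eqref{eq:20}, with $\|y\|_1 \le L_1$ and componentwise bounds $-l^* \le y \le u^*$ of infinity-norm at most $L_1$.

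Next I would formalize the layered DAG already described. The partial-sum analysis shows that after processing the first $i$ columns, the partial sum $\sum_{j \le i} y_j a_j$ has $\ell_\infty$-norm bounded by $\Delta \cdot L_1$, so it lies in the box $U \subseteq \mathbb{Z}^m$ of integer points of infinity-norm at most $\Delta L_1$, whose cardinality is $(2\Delta L_1+1)^m = O(m)^{m+1} \cdot O(\Delta)^{m(m+1)}$. To keep the out-degree small, I would apply the standard binary-encoding trick: each $y_i \in [-l_i^*, u_i^*]$ is represented by $M_i = O(\log L_1) = O(m \log(m\Delta))$ binary choices with powers-of-two coefficients, turning the $i$-th column's ``column block'' into $M_i$ layers of copies of $U$, with out-degree at most $2$ (pick the bit or not). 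A longest $s$-$t$ path in this DAG corresponds to an optimal $y^*$, and since the DAG is acyclic (each arc moves forward through the layers) the longest path can be computed in time linear in the number of arcs via topological order.

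Finally, I would multiply the sizes: the graph has $n \cdot O((\log L_1)^2) \cdot |U|$ arcs (the extra $\log L_1$ factor coming from squaring because we want the total number of binary copies summed over all $i$, and $\log L_1 = O(m \log(m\Delta))$), which simplifies to
\[
  n \cdot O(m)^{m+3} \cdot O(\Delta)^{m(m+1)} \cdot \log(m\Delta)^2,
\]
exactly matching the theorem. The only place requiring a bit of care is that the DAG is indeed acyclic and that every feasible $y$ with $\|y\|_1 \le L_1$ really corresponds to an $s$-$t$ walk whose intermediate partial sums stay inside $U$; this follows from the $\ell_\infty$ bound \eqref{eq:22}. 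I do not expect a genuine obstacle here — the substantive content is already encapsulated in Theorem~\ref{thr:2}; what remains is bookkeeping of the dynamic program and verifying the product of sizes.
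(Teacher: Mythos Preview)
Your proposal is essentially identical to the paper's own argument: solve the LP, invoke Theorem~\ref{thr:2} to shift to problem~\eqref{eq:20}, build the layered DAG whose states are the partial sums in the box $U$, apply the binary-splitting trick to reduce out-degree to~$2$, and compute a longest path in linear time. The only quibble is your justification for the $(\log L_1)^2$ factor --- you first write $M_i = O(\log L_1)$ and then square it with a hand-wave about ``summing over all~$i$,'' which as stated does not parse (summing $n$ copies of $O(\log L_1)$ gives $n\cdot O(\log L_1)$, not $n\cdot O((\log L_1)^2)$); the paper simply asserts $M_i = O((\log L_1)^2)$ directly, so you match its bookkeeping even if the accounting in both places is a bit loose.
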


Let us briefly comment on how to find  these integers $s_1,\dots,s_k$  satisfying~(\ref{item:8}) and  (\ref{item:9}).  It is enough to show how to find them for an interval of the form $[0,U]$, since 
\begin{displaymath}
  [-L,U] = \{-x +  y : x ∈ [0,L], \, y ∈ [0,U]\}. 
\end{displaymath}
Thus, let $[0,U]$ be a given interval. 
 If $U = 2^k-1$ one lets $s_j = 2^{j-1}$ for $j=0,\dots,k-1$. If $U$ is not of this form, then let $p$ be the largest power of two less than or equal to $U$. For the interval $[0,p-1]$ we use the construction from above. Now we are left with representing the interval $[0,U-p+1]$ and concatenate the sequence of integers $s_j$ from both constructions. Since the interval $[0,U-p+1]$ is half as long as $[0,U]$, an inductive argument applies and the conditions ~(\ref{item:8}) and  (\ref{item:9}) are satisfied.

\subsection{Faster algorithms for  integer knapsack} 
\label{sec:bound-knaps-probl}

The \emph{ bounded  knapsack problem} is of the following kind  
\begin{equation}
  \label{eq:14}
  \max\{c^Tx : a^Tx = β, \,  0≤x≤ u, \, x ∈ ℤ^n\} 
\end{equation}
where $c,a,u ∈ ℤ_{>0}^n$ and $β ∈ ℤ_{>0}$.  If the upper bound is $u=β ⋅ \mathbf{1}$, then the knapsack problem is called  \emph{unbounded}. We let $Δ_a$ be an upper bound on the entries of $a$. 

Tamir~\cite{tamir2009new} has shown that the unbounded and bounded knapsack problem can be solved in time $O(n^2 Δ_a^2)$ and in time $O(n^3Δ^2_a)$ respectively. These running times were obtained by applying the proximity result of Cook et al.~\cite{CookGerardsSchrijverTardos86}. We now 
use our proximity bound to save a factor of $n$ in each case. 

\subsubsection*{Unbounded knapsack}
We begin with the unbounded knapsack problem. An optimal fractional vertex  $x^*$ has only one positive entry, $x^*_1$ lets say and by Theorem~\ref{thr:2} there exists an optimal integer solution $z^*$ with $\|z^* - x^*\|_1 ≤ 2⋅ Δ_a +1$. 
We can assume that $x^*_1≥  2⋅ Δ_a +1$ since otherwise $β = O(Δ_a^2)$ and an $O(n ⋅Δ_a^2)$ algorithm is obvious, see  Remark~\ref{rem:4}. If $y^*$ is an optimal solution of 
\begin{equation}
  \label{eq:21}
  \max\{ c^Ty : a^Ty = (2⋅ Δ_a+1 + \{x_1^*\}) a_1, \, y≥0,\, y ∈ℤ^n\},
\end{equation}
then $(y^*_1+⌊x^*_1⌋-(2⋅ Δ_a+1),y^*_2,\dots,y^*_n)$ is an optimal solution of the unbounded knapsack problem. Since all entries of $a$ and $(2⋅ Δ_a+1) a_1$ are positive and bounded by $O(Δ_a^2)$ one can solve the knapsack problem~\eqref{eq:21} in time $O(n ⋅Δ_a^2)$, see again Remark~\ref{rem:4} and notice that the digraph has no cycles as all integers are positive. Consequently we have the following theorem. 
\begin{theorem}
  \label{thr:7}
  An unbounded  knapsack problem~\eqref{eq:14} 
can be solved in time $O(n⋅Δ_a^2)$. 
\end{theorem}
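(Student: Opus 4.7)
The plan is to combine the proximity bound with the fast dynamic program of Remark~\ref{rem:4} applied to a knapsack whose right-hand side is only $O(\Delta_a^2)$. The single LP constraint $a^\transpose x = \beta$ forces every vertex of the relaxation to have at most one nonzero entry, so after reindexing I may assume $x^* = (\beta/a_1)\,e_1$. Theorem~\ref{thr:2} specialized to $m=1$ (and refined as in Section~\ref{sec:integr-gaps-integ}) guarantees an integer optimum $z^*$ with $\|z^*-x^*\|_1 \leq 2\Delta_a+1$.

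The algorithm splits on the size of $x^*_1$. If $x^*_1 \leq 2\Delta_a+1$, then $\beta = a_1 x^*_1 \leq \Delta_a(2\Delta_a+1) = O(\Delta_a^2)$, so the original knapsack already has all data of magnitude $O(\Delta_a^2)$. Because $a>0$ the digraph $D$ of Section~\ref{sec:fast-dynam-progr} is acyclic, and Remark~\ref{rem:4} with $m=1$ solves~\eqref{eq:14} in time $n\cdot O(\Delta_a^2)$. Otherwise $x^*_1 > 2\Delta_a+1$; I set $K \coloneqq \lfloor x^*_1\rfloor - (2\Delta_a+1) \geq 0$ and substitute $z = K\,e_1 + y$. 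The reduced instance becomes
\[
  \max\bigl\{c^\transpose y : a^\transpose y = \beta - K a_1,\ y\geq 0,\ y \in \mathbb Z^n\bigr\},
\]
whose right-hand side is $a_1(x^*_1 - K) \leq a_1(2\Delta_a+2) = O(\Delta_a^2)$. Remark~\ref{rem:4} again solves it in time $n\cdot O(\Delta_a^2)$, after which reattaching $K$ copies of item~$1$ recovers a solution of~\eqref{eq:14}.

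The step that requires care is verifying that an optimum of the reduced instance really produces an optimum of the original, which is where the proximity bound and the choice of $K$ enter. In one direction, the proximity bound gives an integer optimum $z^*$ with $z^*_1 \geq x^*_1 - (2\Delta_a+1) \geq K$, so $z^*-K\,e_1$ is feasible for the reduced instance; in the other direction any feasible $y\geq 0$ of the reduced instance produces the feasible $y+K\,e_1\geq 0$ of the original since $K\geq 0$. The two objectives differ by the constant $c_1 K$, so maximizers correspond. The only mild subtlety is that $x^*_1$ need not be an integer, which forces the shift to be by the integer $K$ rather than by the real number $x^*_1-(2\Delta_a+1)$; this accounts for the extra ``$+1$'' in $a_1(2\Delta_a+2)$ but does not affect the final $O(n\Delta_a^2)$ running time.
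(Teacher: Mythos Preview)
Your proof is correct and follows essentially the same route as the paper: use the proximity bound for $m=1$ to shift off a large multiple of item~$1$, reduce to a knapsack with right-hand side $O(\Delta_a^2)$, and finish with the acyclic dynamic program of Remark~\ref{rem:4}. In fact you are slightly more careful than the paper's own write-up, since you shift by the integer $K=\lfloor x^*_1\rfloor-(2\Delta_a+1)$ rather than by the possibly fractional $x^*_1-(2\Delta_a+1)$, which is needed to keep the reduced problem integral.
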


\subsubsection*{Bounded knapsack}

Setting  $m=1$ in Theorem~\ref{thr:6} we obtain a running time of 
\begin{displaymath}
  O(n ⋅ (\log Δ)^2 ⋅ Δ^2).  
\end{displaymath}
which is already an improvement over the running time of Tamir's algorithm if $\log Δ ≤n$. A running time of $O(n^2 ⋅Δ^2)$ can be obtained as follows. Again, we solve the linear programming relaxation of~\eqref{eq:14} and obtain an optimal vertex solution $x^*$. Following the notation from Section~\ref{sec:integ-progr-with} we now have to solve an integer program of the form
\begin{equation}
  \label{eq:26}
  \max\{ c^T x : a^Tx = β', \, -l^* ≤ x ≤ u^*, \, x ∈ ℤ^n\},
\end{equation}
where $β'$ is an integer with $0 ≤β'≤Δ_a$ and $\|l^*\|_∞, \|u^*\|_∞ ≤ 2⋅Δ_a+1$. This is equivalent to the bounded knapsack problem 
\begin{equation}
  \label{eq:27}
  \max\{ c^T x : a^Tx = β'+ ∑_i a_i⋅l_i^*, \, 0 ≤ y ≤ l_i^*+u^*, \, x ∈ ℤ^n\}.
\end{equation}
The new right-hand-side of this problem is $O(n ⋅Δ_a^2)$.
Pferschy~\cite{pferschy1999dynamic} has shown that a bounded knapsack problem in $n$ variables and right-hand-side $γ$ can be solved in time $O( n ⋅ γ)$. Thus the 
bounded knapsack problem can  be solved in time $O(n^2 ⋅ Δ_a^2)$.

\subsection*{Acknowledgments}

We would like to thank Janos Pach for pointing us to the papers~\cite{sevast1997steinitz,sevastianov1978approximate}. We also thank Daniel Dadush for his very helpful comments, in particular for his hint on a variation of the Steinitz theorem that lead to stronger running-time bounds.
We are grateful to the anonymous referees for several suggestions that led to an improved version of this manuscript. 
We thank  Martin Skutella for hosting us  at TU Berlin and EPFL for hosting the second author. The first author acknowledges support from the Swiss National Science Foundation (SNSF) within the project \emph{Lattice Algorithms and Integer Programming (Nr. 185030)}. The second author acknowledges support from the \emph{Alexander von Humboldt Foundation}.

\bibliographystyle{abbrv}

\end{document}